\newtheorem{monTh}{Theorem}
\begin{document}

\begin{frontmatter}

%% Title, authors and addresses

%% use the tnoteref command within \title for footnotes;
%% use the tnotetext command for theassociated footnote;
%% use the fnref command within \author or \address for footnotes;
%% use the fntext command for theassociated footnote;
%% use the corref command within \author for corresponding author footnotes;
%% use the cortext command for theassociated footnote;
%% use the ead command for the email address,
%% and the form \ead[url] for the home page:
%% \title{Title\tnoteref{label1}}
%% \tnotetext[label1]{}
%%\author{Name\corref{cor1}\fnref{label2}}
%% \ead{email address}
%% \ead[url]{home page}
%% \fntext[label2]{}
%% \cortext[cor1]{}
%% \address{Address\fnref{label3}}
%% \fntext[label3]{}

\title{A New Distribution Version of Boneh-Goh-Nissim  Cryptosystem:\\
Security and performance analysis}

%% use optional labels to link authors explicitly to addresses:
 \author[label1]{Oualid Benamara }
 \address[label1]{University of Science and Technology Houari Boumediene,
Institute of mathematics}
\ead{benamara.oualid@gmail.com}
%% \address[label2]{}
\author[label2]{Fatiha Merazka }
 \address[label2]{University of Science and Technology Houari Boumediene,
Institute of electronics}
%\author{Oualid Benamara}
 \ead{fmerazka@usthb.dz}

\begin{abstract}
The aim of this paper is to provide two distributed versions
of Boneh elliptic curve cryptography (BECC) algorithm. We give a proof of semantic security for the first one. This  guaranties that our algorithm is semantically secure in the contest of active non-adaptive adversary. Furthermore, we prove that the second version of our distributed scheme is computationally
more efficient in time computation than ElGamal elliptic curve threshold cryptosystem and secure under
the Subgroup Decision Problem assumption.

\end{abstract}

%%Graphical abstract
%\begin{graphicalabstract}
%\includegraphics{grabs}
%\end{graphicalabstract}

%%Research highlights
%\begin{highlights}
%\item Research highlight 1
%\item Research highlight 2
%\end{highlights}

\begin{keyword}
Elliptic curve  cryptography\sep Threshold scheme\sep ElGamal cryptosystem.
%% keywords here, in the form: keyword \sep keyword

%% PACS codes here, in the form: \PACS code \sep code

%% MSC codes here, in the form: \MSC code \sep code
%% or \MSC[2008] code \sep code (2000 is the default)

\end{keyword}

\end{frontmatter}
\section{Introduction}

Public key cryptosystems are widely used nowadays in electronic banking, online browsing,
voting systems and so on. While symmetric systems are more efficient than the asymmetric ones, those later are suitable 
for key handling in an intrusted setup.
A
well known drawback in public key cryptography, is that the knowledge of the secret
keys (like a server) provide full control, and may be viewed as a weakness. Hence a new direction in research
named distributed systems. Instead of relying on a single trusted party, the key is distributed among $n$ parties and at least $m$ parties must collaborate in order to recover the secrete key. $m$
is chosen as a parameter of the system, at the setup level.

The definition of semantic security of threshold cryptosystem used in this paper follows the one
stated in \cite{Fouque}.

The decryption operation of the BGNC (?) 
used in this paper is done by computing the discrete logarithm on elliptic
curve. We use \emph{Pollard Lambda} algorithm (reference) which has
complexity $O(\sqrt{T})$ in time of computation, where $T$
is the length of the interval to which the message $m$ belongs.

Our second algorithm Boneh elliptic curve cryptography (BECC) is more efficient than EECC in terms of computation efficiency.
We evaluate the computational time of our algorithm and compare it to EECC\cite{Farras}. 

In our paper, we focus on encryption and decryption time.

\subsection{Organisation}
This paper is organized as follows. In Section \ref{def}, we recall the basic definition and
the scenario when dealing with threshold cryptography. We present our novel threshold scheme together
with a proof of security of this later in Section \ref{ecc}. The other distribution version together with its complexity analysis are in Section \ref{eccb}.

The security of the Paillier, ElGamal ECC (EECC) and BGNC cryptosystems rely on the  \emph{Decisional Composite Residuosity} assumption (DCRA) \cite{Paillier}, the \emph{Elliptic Curve Discrete Logarithm Problem} (ECDLP) \cite{Bena}  and the \emph{ Subgroub Decision
Problem} assumptions \cite{boneh} respectively. So, building on the ECDLP and the SDP assumptions, we prove that
our distributed scheme is semantically secure, in the meaning introduced in \cite{Fouque}. For the best of our knowledge, this has never been done. 
\section{Related work}
Several elliptic curve distributed systems have been proposed in
the literature \cite{Fouque}, \cite{Fournaris}, \cite{Ertaul},
\cite{thresholdecc}, \cite{zhang}. Our paper aims to introduce two
novel distributed schemes for the BGNC
 introduced in \cite{boneh} following the paradigm of
\cite{Fouque}. Fouque et al. \cite{Fouque} proposed a first
distribution of the Paillier system together with a proof of semantic
security. In \cite{Fournaris}, Fournaris proposed a distributed
version of ECC. However, no proof of security was given.
\section{Preliminaries}
\subsection{Zero Knowledge (ZK) Proof of Equality}
\label{zkp}
In this section we will recall the ZK proof of equality algorithm given in \cite{zhang} to produce proofs of validity in step 3
of our threshold ECC, see Section \ref{ecc}.

Suppose that party B needs to prove
that it has a valid secret share key $s$ to party A without divulging
any information about $s$.
\begin{enumerate}
\item A selects two random integers $a$ and $b$ smaller than
the order of the cyclic group generated by a point of an elliptic
curve.
\item It computes $(a_{1},a_{2})=asg \mod p$ and $(b_{1},b_{2})=bg \mod p$,
where $p$ is a prime number related to the finite field $GF(p)$
on which the elliptic curve $E$ is constructed and $g$ is a generator of
the group $G$ of the points on $E$.
\item It also computes $t_{1}=a_{1}b_{1} \mod p$ and  $t_{2}=a_{2}b_{2} \mod p$
and sends $(ag, t_{1}, t_{2})$ to party B.
\item B computes $sag=(r_{1},r_{2})\mod p $.
\item B computes $z_{1}=t_{1}r_{1}^{-1}\mod p$ and $z_{2}=t_{2}r_{2}^{-1} \mod p$
and sends $(z_{1},z_{2})$ to A.

\end{enumerate}

At the end, A verifies that $(b_{1},b_{2})=(z_{1},z_{2})$. If this
is true, B has the secret $s$. Otherwise, the proof $proof=
(z_{1},z_{2})$ is not valid.

The correctness of this scheme can be verified from the Menezes--Vanstone
elliptic curve cryptosystem \cite{zhang}.
section{Threshold Cryptosystems}
\label{def}
\subsection{Formal Definition}
A threshold cryptosystem consists of the following four components:
\begin{itemize}
\item A key generation algorithm takes as input a security parameter $k$, the number
$l$ of decryption servers, the threshold parameter $t$ and a random
string $\omega$. It outputs a public key $PK$, a list $PK_{1},\ldots
,PK_{l}$ of private keys and a list $VK,VK_{1},\ldots,VK_{l}$ of
verification keys.
\item An encryption algorithm takes as input the public key $PK$, a random string $\omega$
and a clear text $M$. It outputs a ciphertext $c$.
\item A share decryption algorithm takes as input the public key $PK$, an index $1\leq i \leq l$,
the private key $SK_{i}$ and a ciphertext $c$. It outputs a decryption share $c_{i}$ and a proof
of its validity $proof_{i}$.
\item A combining algorithm takes as input the public key $PK$, a ciphertext $c$, a list $c_{1},\ldots,c_{l}$
of decryption shares, the list $VK,VK_{1},\ldots,VK_{l}$ of
verification keys and a list $proof_{1},\ldots,proof_{l}$ of
validity proofs. It outputs a cleartext $M$ or fails.
\end{itemize}

\subsection{The Players and the Scenario}
The game includes the following players: a dealer, a combiner, a set
of $l$ servers $P_{i}$, an adversary and users. All are considered
as probabilistic polynomial time Turing machines, playing in the
following scenario:
\begin{itemize}
\item In an initialization phase, the dealer uses the key generation algorithm to
create the public, private and the verification keys. The public key $PK$ and all
the verification keys $VK,VK_{i}$; where $1\leq i \leq l $ are publicized and each server receives its
shares $SK_{i}$ of the secret key $SK$.
\item To encrypt a message, any user can run the encryption algorithm using the
public key $PK$.
\item To decrypt a ciphertext $c$, the combiner first forwards $c$ to the servers. Using
their secret key $SK_{i}$ and their verification keys $VK,VK_{i}$, each server runs the
decryption algorithm and outputs a partial decryption $c_{i}$ with a proof of validity
of the partial decryption $proof_{i}$. Finally, the combiner uses the combining algorithm
to recover the cleartext, if enough partial decryptions are valid.
\end{itemize}

\subsection{Security requirements}
We recall that an active non-adaptive adversary completely controls the behavior of the corrupted
servers and he chooses which servers he wants to corrupt before key generation.
A threshold cryptosystem is said to be $t-robust$ if the combiner is able to correctly decrypt
any ciphertext, even in the presence of an adversary who actively corrupts up to $t$ servers.
Let us consider an attacker who first issues two messages $M_{0}$ and $M_{1}$. We randomly choose one of these messages. We encrypt it and send this ciphertext to the attacker. Finally, he answers which message has been encrypted. We say that the encryption scheme is semantically secure if there exists no such polynomial time attacker able to guess which of the two messages has been encrypted with a non-negligible advantage.
The semantic security for threshold cryptosystem is defined as follow:

Let be an attacker who actively and non-adaptively corrupts $t$ servers learns the public parameters, the secret keys of the corrupted servers, the public verification keys, all the decryption shares and the proof of validity of those shares.
We consider the following game A:
\begin{itemize}
\item[A1:]The attacker chooses to corrupt $t$ servers. He learns all their secret information and he actively controls their behavior.
\item[A2:]The key generation algorithm is run; the public keys are publicized, each server receives its secret keys and the attacker learns the secrets of the corrupted players.
\item[A3:]The attacker chooses a message $M$ and a partial decryption oracle gives her $l$ valid decryption shares of the encryption of $M$, along with proofs of validity. This step is repeated as many times as the attacker wishes.
\item[A4:]The attacker issues two messages $M_{0}$ and $M_{1}$ and send them to an encryption oracle who randomly chooses a bit $b$ and sends back an encryption $c$ of $M_{b}$ to the attacker.
\item[A5:]The attacker repeats step A3, asking for decryption shares of encryptions of chosen messages.
\item[A6:]The attacker outputs a bit $b'$.
\end{itemize}

A threshold encryption scheme is said to be semantically secure against active
 non-adaptive adversaries if for any polynomial time attacker, $b=b'$ with probability only negligibly greater than $1/2$.
\section{The Boneh-Goh-Nissim Cryptosystem}
\label{ecc}
We recall that the BGNC presented in this section is semantically secure under the \emph{subgroub decision.
problem} (SDP) assumption  \cite{boneh}.

\subsection{Description}

\begin{description}
\item[Key Generation:] The public key is $(n,G,g,h)$, where G is a group of order $n$, $n=q_{1}q_{2}$,
$g$ and $u$ are generators of $G$ and $h=u.q_{2}$. The private key is $q_{1}$.
\item[Encryption :] Pick a random $r\in [0,n-1]$. Let $m\in [0,\ldots,T]$ be a message to encrypt.
So the ciphertext is computed as 
\begin{equation}
\label{Eq1}
C=m\times g+r\times h.
\end{equation}
\item[Decryption :]Compute
\begin{equation}
\label{Eq2}
m=log_{q_{1}g}q_{1}C.
\end{equation}
\end{description}
\subsection{Correctness}
The equation \ref{Eq2} can be written in the following form:
\begin{equation}
\label{Eq3}
    mq_1g=q_1C
\end{equation}
We have to prove that the above equation is equivalent to equation \ref{Eq1}. Multiply the two sides of equation \ref{Eq1} by $q_1$ we get:

\begin{align*}
    q_1C&=q_1(m\times g+r\times h)\\
    &=q_1\times m\times g+q_1\times r\times h\\
    &=q_1mg+q_1r(uq_2) \text{ (recall that $h=uq_2$)}\\
    &=mq_1g+ru(q_1q_2)\\
    &=mq_1g+run \text{ (since $n=q_1q_2$)}\\
    &=mq_1g.\text{ (since $n$ is the order of the group $G$ and $u$ is a generator of the group.)}
\end{align*}

\subsection{Distribution Version of BGNC}

\begin{description}
\item[Key Generation:]The dealer chooses an elliptic curve E such that the order
of the group of the point of this curve is $p+1=n \times s$ and such
that $n=q_{1}q_{2}$, $q_{1}$ and $q_{2}$ are two sufficiently large
prime numbers. So there exist a subgroup $G$ of order $n$. Let $g$
and $u$ be two generators of $G$, set $h=q_{2}u$ and $g_{0}=q_{1}g$.
The public key $PK$ will be $PK=(n,G,g,h,g_{0})$ and the secret key
$SK=q_{1}$. It is shared using Shamir's secret sharing scheme: let
$f_{0}=SK$ and, for $i=1,\ldots,t$, $f_{i}$ is randomly chosen in
$\mathbb{Z}_{n}$. Let $f(X)=\sum_{i=0}^{t}f_{i}X^{i}$; the secret
key $SK_{i}$ is $d_{i}=f(i)\mod n$.

The $d_{i}$'s must be different from
$q_{1}$ and $q_{2}$ so that the decryption shares $c_{i}$ may exist.
We have to ensure that $c_{i} \in [0, T]$ for a suitable chosen $T$.

%\newpage
\item[Encryption:]Compute $C=m\times g+r\times h$ as in the BGNC.
\item[Share Decryption Algorithm:]

Let 
$$ d_{i}=q_{1} \alpha_{i}+ \beta_{i}, \quad\beta_{i} < q_{1}$$ 
$\beta_i$ here is the remaining of the division of $d_i$ by $q_1$. $\alpha_i$ is the quotient. Define  
$$\gamma_{i}=q_{1} \alpha_{i}$$ 
Compute
$$c_{i}=log_{q_{1}g}\gamma_{i}C$$ for $i=1,\ldots,l$. The share of
each server will be the couple $(c_{i},g_{i})$. To convince anyone
that the server $i$ has a valid share $d_{i}$, it uses zero
knowledge proof presented in section \ref{zkp}, resulting in a proof
of validity.

\item[Combining Algorithm:]Let $S$ be a set of valid decryption shares
\newline 
$c_{i}, i=1,\dots,t+1$.
Compute $$m=log_{B}A,$$ where $B=\Sigma_{j\in S} \mu_{0,j}^{S}g_{j}$ and
$A=\Sigma_{j\in S}\mu_{0,j}^{S}(c_{i}g_{0}+\beta_{i}C).$
The coefficients $\mu_{i,j}^{S} $ are the Lagrange coefficients defined by
$$\mu_{i,j}^{S}=D \times \frac{\prod_{j'\in S\backslash {j}}(i-j')}{\prod_{j'\in S\backslash {j}}(j-j')} \in \mathbb{Z}$$
for any $i\in \{ 0,\ldots,l\} $ and any $j\in S$.
\end{description}

We will use this Lagrange interpolation formula:
$$D f(i)= \sum_{j\in S}\mu_{i,j}^{S}f(j)\mod n$$
for any $i\in \{ 0,\ldots,l\} $ and any $j\in S$.

So for $i=0$:
$$D f(0)= \sum_{j\in S}\mu_{0,j}^{S}f(j)\mod n.$$
Thus
\begin{equation}
\label{eq:2}
D d_{0}= \sum_{j\in S}\mu_{0,j}^{S}d_{j}.
\end{equation}

We have that

\begin{align*}
    c_{i}=log_{q_{1}g}\gamma_{i}C
    &\Leftrightarrow
    c_{i}q_1g=\gamma_{i}C\\
    &\Leftrightarrow c_{i}g_{0}=\gamma_{i}C \text{ since $g_0=q_1g$}\\
    &\Leftrightarrow c_{i}g_{0}=(d_{i}- \beta_{i}) C\text{ since $\gamma_i=d_{i}- \beta_{i}$}\\
    & \Leftrightarrow c_{i}g_{0}=d_{i}C- \beta_{i} C.
\end{align*}

Finally
\begin{equation}
\label{eq:1}
c_{i}=log_{q_{1}g}\gamma_{i}C \Leftrightarrow c_{i}g_{0}+\beta_{i} C=d_{i}C
\end{equation}

and
$$m=log_{d_{0}g}d_{0}C \Leftrightarrow md_{0}g=d_{0}C.$$
Multiplying the above formula by $D$ we obtain:
$$mDd_{0}g=d_{0}DC,$$
so by formula \ref{eq:2}
$$m\Sigma_{j\in S}\mu_{0,j}^{S}d_{j}g=\Sigma_{j\in S}\mu_{0,j}^{S}d_{j}C $$
and by formula \ref{eq:1}
$$ \Leftrightarrow m\Sigma_{j\in S}\mu_{0,j}^{S}g_{j}=\Sigma_{j\in S}\mu_{0,j}^{S}(c_{i}g_{0}+\beta_{i}C),$$
or
$$mB=A.$$
So the proposed scheme is correct.
\section{Numerical Example}
We have used Sage software to make a numerical example. 

Let $p = 100000000000000003$ and $q = 100000000000000013$.
With those parameters we obtain an  elliptic curve of order:
$ 96000000000000001536000000000\\00003744.$
We choose for illustration purpose $l = 20$ and $ t = 8$.
We have computed two points of order $n=pq$ :
\footnotesize
$$H = (901696848398424513792042624183449898:919534227604146025519668445930592752:1)$$
\normalsize
and
\footnotesize
$$D = (515582362393135083339332053520498553:934365455532976888281795609449803454:1)$$
\normalsize
A generator of the curve is :
$ G = (744114534472669657966133843400531746:77435475791434420844416408483805765:1)$
We chose to encrypt a message $m = 10$.
We obtain the shares:
[320, 15530, 213640, 1457210, 6593360, 22845370, 65658680, 164354090].
We apply the formula stated in the algorithm, we recover the clear message $m = 10$.

\subsection{Details on Sage Commands Used for the Above Example}
To obtain the two prime numbers and the suitable order for the field
we use the following code :
\begin{verbatim}
p = next_prime(100000000000000000); # this will generate
 a prime greater then 100000000000000000.
q = next_prime(p);# a prime number greater than p.
n = p * q;
l = 1;
pp = l*n-1;
while (not(is_prime(pp)) or mod(pp,3)==1):
    l = l+1;
    pp = l*n-1;
\end{verbatim}

\newpage
After running this code, we obtain a prime number $pp$ with $pp+1 =
l*n$ and this will ensure that the following command:
\begin{verbatim}
E = EllipticCurve(GF(pp),[0,1]);
\end{verbatim}
will generate an elliptic curve of order $pp$ as required in our key generation step.
When looking for a generator of the curve, we use
\begin{verbatim}
E.gens();
\end{verbatim}
To obtain a point of order $n$ we generate a random point and test if his order is as desired:
\begin{verbatim}
o = 1;
while (o <> n):
    H = E.random_point();
    o = H.order();
\end{verbatim}
The discrete logarithm in sage when the range of the
logarithm is known is achived in Sage with the command:
\begin{verbatim}
discrete_log_lambda(P,Q,[0,T],operation='+');
\end{verbatim}
Here $Q$ is the base, $T$ is the range and the operation is relative to
the group on which we are working on.

\begin{monTh}
\label{th}
Under the SDP and the ECDLP assumptions, the distributed version of BGNC (DBGNC) is semantically secure against active non-adaptive adversaries.
\end{monTh}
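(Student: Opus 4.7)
The plan is to run a Fouque-style simulation reduction: assuming a PPT adversary $\mathcal{A}$ wins the DBGNC semantic-security game with non-negligible advantage $\varepsilon$, I would build a reduction $\mathcal{B}$ that breaks plain BGNC semantic security (equivalently SDP), with the ECDLP assumption used to absorb the extra public element $g_{0}=q_{1}g$. The argument decomposes into three pieces: simulating the public key and the corrupted shares, simulating the honest decryption shares with their validity proofs, and verifying that $\mathcal{A}$'s simulated view is indistinguishable from the real one.

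First, $\mathcal{B}$ would take a BGNC instance $(n,G,g,h)$ from its challenger and promote it to a DBGNC public key $(n,G,g,h,g_{0})$ by sampling $g_{0}$ uniformly at random from the order-$q_{2}$ subgroup of $G$. Since $\mathcal{B}$ does not know $q_{1}$, it cannot compute $q_{1}g$ directly; however, invoking ECDLP in $G$ shows that a uniform element of the $q_{2}$-subgroup is computationally indistinguishable from the honestly generated $g_{0}$ to any polynomial-time $\mathcal{A}$. Because the corruption is non-adaptive, $\mathcal{A}$ commits to a set $T$ of $t$ indices before key generation, and $\mathcal{B}$ draws $d_{i}\in\mathbb{Z}_{n}$ independently and uniformly for $i\in T$. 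This matches the real distribution, since any $t$ evaluations of a random degree-$t$ polynomial $f$ with fixed constant term $f(0)=q_{1}$ are jointly uniform over $\mathbb{Z}_{n}^{t}$.

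When $\mathcal{A}$ submits the challenge pair $(m_{0},m_{1})$, $\mathcal{B}$ would forward it to its BGNC challenger and relay the received ciphertext $c^{*}$ back to $\mathcal{A}$. To supply the honest servers' decryption shares (should the experiment require them, or to answer auxiliary queries on ciphertexts $c\neq c^{*}$), $\mathcal{B}$ uses the Lagrange identity underlying the combiner,
\[
A=\sum_{j\in S}\mu_{0,j}^{S}(c_{j}g_{0}+\beta_{j}C)=mB,
\]
to manufacture pairs $(c_{j},g_{j})$ compatible with the $t$ corrupted $d_{i}$, and then invokes the HVZK simulator of the protocol recalled in Section~\ref{zkp} to produce the accompanying proofs of validity. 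The essential point is that the $t$ corrupted shares leave exactly one degree of freedom in the sharing polynomial, so $\mathcal{B}$ can fix the missing constraint consistently without ever instantiating $q_{1}$.

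Finally $\mathcal{B}$ outputs $\mathcal{A}$'s bit $b'$, thereby inheriting an advantage of $\varepsilon-\mathrm{negl}$ against the underlying BGNC, which contradicts SDP. The main obstacle I anticipate is the indistinguishability argument for the simulated decryption transcripts: one must chain the ZK-simulator guarantee, the $t$-wise uniformity of the Shamir sharing, and the ECDLP-based hiding of $g_{0}$ into a single hybrid so that the joint distribution of $(g_{0},\{d_{i}\}_{i\in T},c^{*},\{(c_{j},\mathrm{proof}_{j})\}_{j\notin T})$ in the simulation is indistinguishable from that in the real game. Carried out carefully, this establishes the theorem.
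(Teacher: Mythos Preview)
Your overall strategy---a Fouque-style reduction from a DBGNC adversary to a plain BGNC distinguisher, with the corrupted shares drawn uniformly and the honest shares rebuilt through the Lagrange identity---is exactly the route the paper takes. The paper's argument is in fact terser than yours: it never invokes a ZK simulator for the validity proofs and does not spell out a hybrid, so on those points your write-up is the more careful of the two.

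The step that does not go through as written is your simulation of $g_{0}$. You propose that $\mathcal{B}$ ``sample $g_{0}$ uniformly at random from the order-$q_{2}$ subgroup of $G$'' and then appeal to ECDLP for indistinguishability from the true $g_{0}=q_{1}g$. There are two concrete problems. First, $\mathcal{B}$ receives only $(n,G,g,h)$ with $h$ of order $q_{1}$; without knowledge of either factor of $n$ it has no efficient way to produce a nontrivial element of the order-$q_{2}$ subgroup, and indeed being able to do so on demand would already yield a subgroup-membership tester and contradict SDP. Second, ECDLP is a \emph{search} assumption (hardness of extracting $q_{1}$ from $(g,q_{1}g)$); it does not by itself deliver the \emph{decisional} statement that $(g,q_{1}g)$ is computationally indistinguishable from $(g,g_{0}')$ for some efficiently sampleable $g_{0}'$. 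So even granting the sampling, the indistinguishability claim is not justified by the assumption you cite.

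For comparison, the paper simply sidesteps this point: in its step~A2 the simulator hands $\mathcal{A}$ only $(n,G,g,h,d_{1},\dots,d_{t})$, with $g_{0}$ silently dropped from the simulated public key, yet $g_{0}$ reappears without provenance in the formula $a=Dg_{0}$ used to manufacture the non-corrupted shares. Your instinct that $g_{0}$ must be accounted for in the reduction is correct; the paper's proof leaves that issue open rather than resolving it.
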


\begin{proof}
We prove using the reduction process. We show that if an adversary can break the semantic
security of the DBGNC, then we can construct an attacker who can break the semantic
security of BGNC. Following this procedure, we have to simulate data received by the
adversary in steps A2, A3 and A5 of the game A. So the proof consists of proving that the data
simulated during the steps of the game A are indistinguishable from real one.

Let us assume the existence of an adversary $\mathcal{ A}$ able to break the semantic security of the threshold
scheme. We now describe an attacker which uses $\mathcal{ A}$ in order to break the semantic security of BGNC. In a first phase the attacker obtains the public key $(n,G,g,h)$ and
he chooses two messages $M_{0}$ and $M_{1}$ which are sent to an encryption oracle who randomly chooses
a bit $b$ and return an encryption $c$ of $M_{b}$. In a second phase the attacker
tries to guess which message has been encrypted.

We now describe how to feed an adversary $\mathcal{ A}$ of the
threshold scheme in order to make a semantic attacker. In step $A1$
of game A, the adversary chooses to corrupt $t$ servers
$P_{1},\ldots,P_{t}$. In the find phase, the attacker first obtains
the public key $PK=(n,G,g,h)$ of the BGNC. He randomly
chooses $t$ values $d_{1},\ldots,d_{t}$ in the range
$\{0,\ldots,n\}$. $T$ is suitably chosen to efficiently compute the
discrete logarithm on the elliptic curve. As $q_{2}$ is unknown
and $T<q_{2}$. We have to simulate the value of $T$. However, as the
computation
efficiency of the function $log$ is well known, %(section \ref{com}),
 a suitable choice of $T$ may be obtained.

%Then he computes $v_{i}=v^{\delta s_{i}}[n^{2}]$ and the other verification keys as
%$$v_{i}=(1+2n)^{\mu_{i,0}} \times \prod_{j\in S\backslash \{0\}}v^{s_{j}\mu_{i,j}}mod n^{2}$$
%where $S={0,...,t}$.

In step 2 of game A the attacker sends $(n,G,g,h,d_{1},\ldots,d_{t})$ to $\mathcal{A} $.

During step A3, $\mathcal{ A}$ chooses a message $M$ and send it to
the attacker. He compute a valid encryption of $M$ given by 
$c=r\times h+M\times g$, where $r$ is a
random number. The decryption shares of
the corrupted players are correctly computed using the
$d_{i}$'s:$c_{i}=log_{d_{1}g} \gamma _{i}c$, and $g_{i}=d_{i}g $ for
$i=1,\ldots,t$. The other shares are computed from the following
formula
$$c_{i}=log_{a}b,$$
where $a=Dg_{0}$ and $b=[\sum_{j\in S}\mu_{ij}( \beta_{j}c+g_{0}c_{j})]-D \beta_{i}c $.
We have that $c_{i}=log_{g_{0}} \gamma _{i}c$. So $c_{i}g_{0}= \gamma _{i}c$, and by
multiplying by $D$: $Dc_{i}g_{0}=D \gamma_{i} c=D(d_{i}- \beta_{i})c=Dd_{i}c-D \beta_{i}c$. We have also
$Dd_{i}=\sum_{j\in S}\mu_{ij}d_{j}$,
this is why we chosen the values of $c_{i}$ as in the above formula.
%Furthermore, to compute the value of $c_{0}$, without the knowledge of $d_{0}$,
%we have that $c_{0}=log_{d_{0}g}d_{0}Dc$. $d_{0}g=g_{0}$ is part of the public key and
%$d_{0}c= d_{0}( r\times h+M\times g) = d_{0} r\times h+ d_{0} M\times g= d_{0} r\times h+M\times g_{0}
%= M\times g_{0}$ (this is for $h$ is of order $d_{0}$), so $c_{0}$ may be computed without any knowledge of the secret key $d_{0}$.

Finally, the adversary returns \[(c,c_{1},\ldots,c_{l},proof_{1},\ldots,proof_{l}).\]

In step A4, $\mathcal{ A}$ chooses and outputs two messages $M_{0}$ and $M_{1}$. The attacker outputs those
two messages as the results of the find phase.

The encryption oracle for the non-threshold ECC scheme chooses a random bit and sends an encryption
$c$  of $M_{b}$ to the attacker. He forwards $c$ to the adversary $\mathcal{ A}$.

Step A5 is similar to step A3. Finally, in step A6, $\mathcal{ A}$ answers a bit $b'$ which is returned by the
attacker in the guess phase.

We have that $d_{1},\ldots,d_{t}$ and the verification keys are
randomly chosen on  the interval $[0,T]$. So the distribution
received by $\mathcal{ A}$ during the key generation step is
indistinguishable from a real one.

Also, the value received in step A3 and A5 are computed from the secret keys and the values
of the non-corrupted servers are randomly chosen on the interval $[0,n]$.

Finally, all the data simulated by the attacker cannot be distinguished from real ones by $\mathcal{ A}$.
Consequently, if there exists a polynomial time adversary $\mathcal{ A}$ able to break the semantic security of the
threshold scheme, we have made an attacker able to break the semantic security of the original ECC scheme.
\end{proof}

\section{A Second Distribution of BGNC}
\label{eccb}
\subsection{The ElGamal Distributed Version}
First we recall the distributed version of the ElGamal elliptic curve
cryptosystem (DEGECC) taken from \cite{Ertaul}. 
Let the parameters of
the cryptosystem be a finite field $GF(p) $, where $p$ is a prime
number, an elliptic curve $E_{p}(a,b)$ and a point $G$ of order $q$
on the group of point of $E$. Then we run the cryptosystem as follow
:

Bob's private key is $n_{B}$ with $0<n_{B}<q$ and the public key is $K_{B}=n_{B}G$.
\begin{enumerate}
\item First we choose a prime number $p>max(M,n)$, and define $a_{0}=M$ to be the message. Then we randomly select
$k-1$ independent coefficients $a_{1},\ldots,a_{k-1}$, with  $0\leq
a_{j}\leq p-1$; which  define the random polynomial $f(x)$ over a Galois prime field $GF(p)$.
\item We compute $n$ shares, $M_{i}=f(x_{i}) \mod p, 1\leq i\leq n$, where $x_{i}$ can be just the public
index $i$ for simplicity. Convert each index $i$ to a point $P_{i}$ on the elliptic curve $E$.
\item Alice picks a random number $r$ and send $rG$ and $P_{i}+rK_{B}$ to Bob with index $t$.
\item Bob recovers each elliptic curve point by calculating $P_{i}+rK_{B}-n_{B}rG=P_{i}$.
\item Bob converts $P_{i}$ to $M_{i}$ and deduces $M$ by using Lagrange interpolation formula.
\end{enumerate}

Our novel distributed crytosystem runs as follows: The public and
private data are as in section \ref{ecc}.
\begin{enumerate}
\item First Define $a_{0}=M$, the message. Then we select
$k-1$ random, independent coefficients $a_{1},\ldots,a_{k-1},0\leq
a_{j}\leq p-1$, defining the random polynomial $f(x)$ over $GF(p) $.
\item We compute $n$ shares, $M_{i}=f(x_{i}) \mod p, 1\leq i\leq n$, where $x_{i}$ can be just the public
index $i$ for simplicity.
\item Alice picks a random number $r\in [0,n-1]$. So the ciphertext is computed as
$$C_{i}=M_{i}\times g+r\times h.$$
\item Bob recovers each elliptic curve point by calculating $$M_{i}=log_{q_{1}g}q_{1}C_{i}.$$
\item Bob deduces $M$ by using Lagrange interpolation formula.
\end{enumerate}

If an adversary wishes to cryptanalyse our scheme, he will have to
break the SDP assumption, which has not been broken to the best of
our knowledge.

\subsection{Implementation Details and Computation Efficiency}
In this section we compare our second version of the distributed BGNC to DEGECC. We focus on
the ciphering plus the deciphering time of the two cryptosytems. We have
used Sage software \cite{sage} to compare the computational
efficiency of the two algorithms. Rather then implementing the hole
algorithm, we restricted to the encryption plus the decryption
functions, as the other parts of the two algorithms are the same. As
we deal with the computation time, we take the average value of all
the random values $r$. Also, we have chosen $q/2$ as the private key
 $n_{B}$.
 
For the BGNC, we have used $T=100$. So, for a given share
$M_{i}<p$, write $M_{i}$ in base 100. Thus, we obtain
$(M'_{1},\ldots,M'_{s})$, $s$ depending on the value of $p$. Hence,
for all $0<i<s+1$, $0<M'_{i}<100$.

W compute the number of point addition during the ciphering plus the deciphering process
of the two cryptosystems. This is because this operation is the more expensive in time computation.
So, for the DEGECC, we will need $r+r+1+n_{B}+1=2r+2+n_{B}$ point addition for encryption and
decryption of one single point $P_{i}$. For the BGNC, we will need $r+M_{i}$ point addition
plus $s/2$ computation of the logarithm with the Pollard Lambda algorithm. As the value
of $s$ will have logarithm behavior, DEGECC will be more costly then BGNC for large keys.

We have taken 3 curves of different key lengths from \cite{site} and computed the average time
consumption of 100 runs of the two algorithms. 
We plotted  that and obtained figure \ref{ph} which slows the good performance of our algorithm
over the ElGamal one.

\begin{figure}
\centering
\includegraphics[scale=0.6]{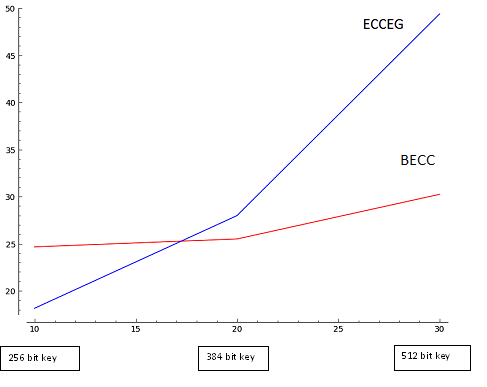}
\caption{Computation time with T}
\label{ph}
\end{figure}

\section{Conclusion}
We have designed two distributions of the BGNC. The first one is proved semantically secure and the second one is more efficient than the DEGECC. While efficiency is considered a poor comparison parameter, it can be an essential one in the setting of a sensor network, wherein energy consumption is of concern, rathen than the security level of the scheme. 

%% \linenumbers

%% main text

%% The Appendices part is started with the command \appendix;
%% appendix sections are then done as normal sections
%% \appendix

%% \section{}
%% \label{}

%% For citations use: 
%%       \citet{<label>} ==> Jones et al. [21]
%%       \citep{<label>} ==> [21]
%%

%% If you have bibdatabase file and want bibtex to generate the
%% bibitems, please use
%%
%%  \bibliographystyle{elsarticle-num-names} 
%%  \bibliography{<your bibdatabase>}

%% else use the following coding to input the bibitems directly in the
%% TeX file.

%\begin{thebibliography}{00}

%% \bibitem[Author(year)]{label}
%% Text of bibliographic item

%\bibitem[ ()]{}

%\end{thebibliography}
\section{References}
%\bibliographystyle{plain}
%\bibliography{bib}

\end{document}